\def\1{{\bf 1}}
\begin{document}

\newtheorem{thm}{Theorem}[section]
\newtheorem{lem}[thm]{Lemma}
\newtheorem{prop}[thm]{Proposition}
\newtheorem{cor}[thm]{Corollary}
\newtheorem{defi}[thm]{Definition}
\newtheorem{remark}[thm]{Remark}
\newtheorem{result}[thm]{Result}
\newtheorem{exm}[thm]{Example}
\title
{\bf Construction of extremal Type II $\mathbb{Z}_{8}$-codes via doubling method}
\author
{Sara Ban (sban@math.uniri.hr), ORCID: 0000-0002-1837-8701   \\
[3pt]
Sanja Rukavina (sanjar@math.uniri.hr), ORCID: 0000-0003-3365-7925 \\
{\it  Faculty of Mathematics, University of Rijeka}\\
{\it Radmile Matej$\check{c}$i\'{c} 2, 51000 Rijeka, Croatia }\\[-15pt]
}
\date{}
\maketitle

\bigskip

\vspace*{0.2cm}

\begin{abstract}
\noindent
Extremal Type II $\mathbb{Z}_{8}$-codes are a class of self-dual $\mathbb{Z}_{8}$-codes with Euclidean weights divisible by $16$ and the largest possible minimum Euclidean weight for a given
length. 
We introduce a doubling method for constructing a Type II $\mathbb{Z}_{2k}$-code of length $n$ from a known  Type II $\mathbb{Z}_{2k}$-code of length $n$. 
Based on this method, we develop an algorithm to construct new extremal Type II $\mathbb{Z}_8$-codes starting from an extremal Type II $\mathbb{Z}_8$-code of type $(\frac{n}{2},0,0)$ with an extremal $\mathbb{Z}_4$-residue code and length $24, 32$ or $40$.\\ 
We construct at least ten new extremal Type II $\mathbb{Z}_8$-codes of length $32$ and type $(15,1,1)$. Extremal Type II $\mathbb{Z}_8$-codes of length $32$ of this type were not known before. Moreover, the binary residue codes of the constructed extremal $\mathbb{Z}_8$-codes are optimal $[32,15]$ binary codes. 
 
\end{abstract}

{\bf Keywords:} Type II $\mathbb{Z}_{2k}$-code, extremal $\mathbb{Z}_8$-code, residue code, doubling method
\\
{\bf Mathematics Subject Classification:} 94B05

\section{Introduction}

The discovery of good nonlinear binary codes arising via the Gray map from $\mathbb{Z}_4$-linear codes motivated the study of codes over rings in general (see \cite{1}). Construction of unimodular lattices with large minimum norm has motivated
the construction of new self-dual $\mathbb{Z}_{2k}$-codes with large minimum Euclidean weights (see, for example, \cite{Bannai, Doug}).
Especially, $\mathbb{Z}_{8}$-codes have received attention by many researchers.
For instance,  some
construction methods for self-dual codes over $\mathbb{Z}_{8}$ for arbitrary length greater than $8$ are given in \cite{sup}.

Extremal Type II $\mathbb{Z}_{8}$-codes are a class of self-dual $\mathbb{Z}_{8}$-codes with Euclidean weights divisible by $16$ and the largest possible minimum Euclidean weight for a given
length. For lengths $8$ and $16,$ every Type II $\mathbb{Z}_{8}$-code is extremal (see \cite{Hnovi}).
In \cite{6810}, the previously known results on the existence of extremal Type II $\mathbb{Z}_8$-codes for greater lengths are summarized: there are three such codes of length $24$ (see \cite{Bannai, Georgiou}), five codes of length $32,$ a large number of codes of length $40$ and one such code of length $48$ (see \cite{Chapman, Conway}), up to equivalence. In addition, a new extremal Type II $\mathbb{Z}_8$-code $D_n$ is constructed for each of the lengths $n \in\{24,32,40\}$ (see [\cite{6810}, Subsection 5.2]).

The doubling method for a construction of Type II $\mathbb{Z}_4$-codes is introduced in \cite{Chan}. It was used for the construction of extremal Type II $\mathbb{Z}_4$-codes of length $32$ and $40$ in \cite{MM} and \cite{40doubling}, respectively.
In this paper, we introduce a doubling method for constructing a Type II $\mathbb{Z}_{2k}$-code of length $n$ from a known  Type II $\mathbb{Z}_{2k}$-code of length $n$. We also develop an algorithm that uses this method to construct a Type II $\mathbb{Z}_{2^m}$-code of length $n$ from a known  Type II $\mathbb{Z}_{2^m}$-code of length $n$. Finally, by specifying the method for $m=3$ and $n \in \{24,32,40\},$ we obtain an algorithm that is then used to construct at least $10$ new extremal Type II $\mathbb{Z}_{8}$-codes of length $32$.

The paper is organized as follows. The next section gives definitions and basic properties of codes over $\mathbb{Z}_{2k}$ that will be needed in our work. 
In Section \ref{constr}, we introduce the doubling method to construct new Type II $\mathbb{Z}_{2k}$-codes starting from a Type II $\mathbb{Z}_{2k}$-code. Especially, we consider the codes over $\mathbb{Z}_{2^m}.$ 
Finally, in the last section, we present a method to construct new extremal Type II $\mathbb{Z}_8$-codes starting from an extremal Type II $\mathbb{Z}_8$-code of type $(\frac{n}{2},0,0)$ with an extremal $\mathbb{Z}_4$-residue code and length $24, 32$ or $40$. 
Using this method, we construct $68850$ extremal Type II $\mathbb{Z}_8$-codes of type $(15,1,1)$ and length $32.$  
We give the weight distributions of the corresponding binary residue codes. With respect to these weight distributions, all constructed extremal Type II $\mathbb{Z}_8$-codes are divided into ten classes. For each of them we give a generator matrix in the standard form for one class representative.

\section{Preliminaries}\label{pre}

For terms not defined in this paper and the basic facts of coding theory we refer the reader to \cite{BBF, FEC, vdt}.

Let $\mathbb{Z}_{2k}$ denote the ring of integers modulo $2k.$ A linear code $C$ of length $n$ over $\mathbb{Z}_{2k}$ (i.e., a {\it $\mathbb{Z}_{2k}$-code}) is an additive subgroup  of $\mathbb{Z}_{2k}^n.$  
Two codes over $\mathbb{Z}_{2k}$ are {\it equivalent} if one can be obtained from the other by permuting the coordinates and (if necessary) changing the signs of certain coordinates. Codes differing by only a
permutation of coordinates are called {\it permutation-equivalent}.
  An element of $C$ is called a {\it codeword} of $C.$ A {\it generator matrix} of $C$ is a matrix whose rows generate $C.$

The {\it dual code} $C^\bot$ of $C$ is defined as
$$
C^\bot=\{x\in \mathbb{Z}_{2k}^n\,|\,\left\langle x, y\right\rangle=0\ \text{for all}\ y\in C\},
$$
where $\left\langle x,\\ y\right\rangle=x_1y_1+x_2y_2+\dots+x_ny_n\ (\text{mod}\ 2k)$ for
$x=(x_1,x_2,\dots,x_n)$ and $y=(y_1,y_2,\dots,y_n).$ The code $C$ is {\it self-orthogonal} when $C\subseteq C^\bot$ and {\it self-dual} if $C= C^\bot.$

The {\it Euclidean weight} of a codeword $x=(x_1,x_2,\dots,x_n) \in \mathbb{Z}_{2k}^n$ is 
$$wt_E (x) =\sum_ {i=1}^n \min \{x_i^2, (2k-x_i)^2 \}.$$ 

It holds \begin{equation}\label{et} wt_E(x+y)\equiv wt_E(x)+wt_E(y)+2\left\langle x, y\right\rangle \ (\text{mod}\ 4k)\end{equation}
for all $x, y \in \mathbb{Z}_{2k}^n$ (see \cite{Bannai}).  We denote the number of coordinates $i$ (where $i=0,1,\dots,2k-1$) in a codeword $x\in \mathbb{Z}_{2k}^n$  by $n_i(x).$
 
The minimum Euclidean weight $d_E$ of $C$ is the smallest Euclidean weight among all nonzero
codewords of $C.$ 
A self-dual $\mathbb{Z}_{2k}$-code is called {\it Type II} if it has the property that every Euclidean
weight is divisible by $4k.$
Type II $\mathbb{Z}_{2k}$-codes are a remarkable
class of self-dual codes related to even unimodular lattices. 
There is a Type II $\mathbb{Z}_{2k}$-code of length $n$ if and only if $n$ is divisible by eight \cite{Bannai}. For
Type II $\mathbb{Z}_{2k}$-codes $C$ of length $n,$ the upper bound on the minimum Euclidean weight
\begin{equation}\label{de}
d_E(C) \leq 4k\left\lfloor\frac{n}{24} \right\rfloor+ 4k 
\end{equation}
holds for $k = 1$ and $2,$ and for $k\geq 3$ it holds under the assumption that $\left\lfloor\frac{n}{24} \right\rfloor\leq k-2$ (see \cite{Bannai}). We say that a Type II $\mathbb{Z}_{2k}$-code
meeting (\ref{de}) with equality is {\it extremal}. 

If $C$ is a $\mathbb{Z}_{2^m}$-code, then the code  $C^{(2^k)}=\{ x\ (\text{mod}\ 2^k)\, |\, x \in C\}$, $1 \leq k \leq m-1,$ is the {\it $\mathbb{Z}_{2^k}$-residue code} of  $C.$  
Each code $C$ over $\mathbb{Z}_{2^m}$ is permutation-equivalent to a code with a generator matrix in  {\it standard form}
$$\left(\begin{array}{ccccccc}
I_{k_1}&A_{1,2}&A_{1,3}&A_{1,4}&\cdots&\cdots&A_{1,m+1}\\
0&2I_{k_2}&2A_{2,3}&2A_{2,4}&\cdots&\cdots&2A_{2,m+1}\\
0&0&4I_{k_3}&4A_{3,4}&\cdots&\cdots&4A_{3,m+1}\\
\vdots&\vdots&\vdots&\ddots&\ddots& & \vdots\\
\vdots&\vdots&\vdots&\ddots&\ddots& & \vdots\\
0&0&0&\cdots&0&2^{m-1}I_{k_m}&2^{m-1}A_{m,m+1}
\end{array}\right),$$
where the matrix $A_{i,j}$ has elements in $\mathbb{Z}_{2^{j-1}}.$
We say that $C$ is of {\it type} $(k_1, k_2, k_3,\dots, k_m).$ The code $C$ has
$\prod_{j=1}^m (2^{m-j+1})^{k_j}$ codewords.\\

In this work, we have used computer algebra systems GAP \cite{cite_key1Gap} and Magma \cite{magma}.

\section{Method of construction} \label{constr}

In \cite{Chan}, the doubling method for a construction of Type II $\mathbb{Z}_4$-codes is introduced.
In the next theorem, we generalize results from \cite{Chan} and give the doubling method for a construction of Type II $\mathbb{Z}_{2k}$-codes. 

\begin{thm}\label{TmDoubling}
Let  $k\geq 2.$ Let $C$ be a Type II $\mathbb{Z}_{2k}$-code of length $n$ and let $n_i(x)$  denote the number of coordinates $i$  in $x\in \mathbb{Z}_{2k}^n$.
Let $ku\in \mathbb{Z}_{2k}^{n}\setminus C$ be a codeword with all coordinates equal to $0$ or $k$ with the following property:
if $k$ is odd, $n_k(ku)$ is divisible by four, if $k$ is even and not divisible by four, $n_k(ku)$ is even.
Let $C_0=\{v\in C\ |\ \left\langle ku,v\right\rangle=0\}.$
Then $\widetilde{C}=C_0 \oplus \left\langle ku\right\rangle$ is a Type II $\mathbb{Z}_{2k}$-code.
\end{thm}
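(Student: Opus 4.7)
The plan is to verify that $\widetilde{C}$ is self-dual and Type II; the main technical point is a short case analysis on $k$ modulo $4$ that turns the hypothesis on $n_k(ku)$ into the divisibility of $k^2 n_k(ku)$ by $2k$ (needed for self-orthogonality) and by $4k$ (needed for Type II). The structural setup is to recognize $C_0$ as the kernel of a homomorphism onto a $2$-element subgroup of $\mathbb{Z}_{2k}$: since $ku$ has entries in $\{0,k\}$, the map $v\mapsto\langle ku,v\rangle = k\sum_{u_i=1}v_i$ sends $C$ into $\{0,k\}\subset\mathbb{Z}_{2k}$, and because $C=C^\perp$ with $ku\notin C$, this map is surjective. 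Hence $[C:C_0]=2$, and combined with $|\langle ku\rangle|=2$ (because $2(ku)=0$) and $C_0\cap\langle ku\rangle=\{0\}$ (which follows from $ku\notin C\supseteq C_0$), this yields $|\widetilde{C}|=|C|=(2k)^{n/2}$.

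Next I would establish self-orthogonality of $\widetilde{C}$. Pairings between elements of $C_0$ vanish since $C_0\subseteq C=C^\perp$; mixed pairings vanish by the definition of $C_0$; and $\langle ku,ku\rangle=k^2 n_k(ku)\pmod{2k}$ is where the divisibility hypothesis enters. For odd $k$ the assumption $4\mid n_k(ku)$ suffices; for $k\equiv 2\pmod 4$ the assumption $2\mid n_k(ku)$ suffices; and for $4\mid k$ no additional hypothesis is needed since already $2k\mid k^2$. Together with the general identity $|D|\cdot|D^\perp|=(2k)^n$ for codes over $\mathbb{Z}_{2k}$ and the cardinality count above, self-orthogonality of $\widetilde{C}$ upgrades to $\widetilde{C}=\widetilde{C}^\perp$.

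For the Type II property, every codeword of $\widetilde{C}$ has the form $c+\varepsilon ku$ with $c\in C_0$ and $\varepsilon\in\{0,1\}$. Applying (\ref{et}) gives $wt_E(c+ku)\equiv wt_E(c)+wt_E(ku)+2\langle c,ku\rangle\pmod{4k}$, in which the first term is divisible by $4k$ by the Type II property of $C$, the third vanishes because $c\in C_0$, and the middle term equals $k^2 n_k(ku)$. The same three-case parity analysis, now aiming for divisibility by $4k$ rather than $2k$, shows that the stated hypotheses on $n_k(ku)$ are exactly what is needed. I expect this parity bookkeeping to be the only step requiring any actual calculation, and it becomes routine once the three residue classes of $k$ modulo $4$ are treated separately.
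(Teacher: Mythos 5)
Your proposal is correct and follows essentially the same route as the paper's proof: verify that $wt_E(ku)=k^2\,n_k(ku)$ is divisible by $4k$ (the case analysis on $k$ modulo $4$ is exactly where the hypotheses on $n_k(ku)$ enter) and that $\left\langle ku,ku\right\rangle=0$, use the identity (\ref{et}) together with $\left\langle v,ku\right\rangle=0$ for $v\in C_0$ to get self-orthogonality and weights divisible by $4k$, and then the index-two argument $|\widetilde{C}|=2|C_0|=|C|$ to upgrade to self-duality. You merely make explicit some steps the paper leaves terse (the kernel/surjectivity description of $C_0$ and the cardinality criterion for self-duality), so no substantive difference.
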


\begin{proof}
The Euclidean weight $wt_E(ku)=n_k(ku)\cdot k^2$ is divisible by  $4k$ and $\left\langle ku,ku\right\rangle=0.$ It follows from (\ref{et}) that $\widetilde{C}$ is self-orthogonal with all Euclidean weights divisible by $4k.$

The codeword $ku\notin C,$ so there is a codeword $w\in C$ such that $\left\langle ku,w\right\rangle=k.$
Suppose $\tilde{w}\in C\setminus C_0.$ Then $\left\langle ku,\tilde{w}\right\rangle=k.$ Therefore, $\tilde{w}\in C_0+w$ and $C_0$ and $C_0+w$ are the only cosets of $C_0$ in $C.$ Since $|\widetilde{C}|=\left|C_0\right|\cdot 2=\left|C\right|,$ $\widetilde{C}$ is a Type II $\mathbb{Z}_{2k}$-code.
\end{proof}

When considering Type II $\mathbb{Z}_{2^m}$-codes, we can restrict the possible choices for $ku=2^{m-1}u\in \mathbb{Z}_{2^m}^{n}\setminus C$.

\begin{thm}\label{nule} 
Let $m\geq 2.$ Let $C$ be a Type II $\mathbb{Z}_{2^m}$-code of length $n$ and type $(k_1, k_2,\dots, k_m).$ The choice of $2^{m-1}u\in \mathbb{Z}_{2^m}^{n}\setminus C$ in Theorem \ref{TmDoubling} can be limited to codewords with zeroes on the first $k_1+k_2+\cdots +k_m$ coordinates.
\end{thm}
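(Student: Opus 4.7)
The plan is to normalise $C$ to standard form, introduce explicit coset corrections $c_\ell \in C_0$, and sweep them through to zero out the first $k_1+\cdots+k_m$ coordinates of $2^{m-1}u$ without breaking the hypotheses of Theorem \ref{TmDoubling}. Since adding elements of $C_0$ leaves $\widetilde{C} = C_0 \oplus \langle 2^{m-1}u\rangle$ unchanged, the corrected vector produces the same Type II code as the original choice.

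Up to permutation-equivalence, I will assume that $C$ has a generator matrix in the standard form of Section \ref{pre}. Denote its rows by $g^{(j,i)}$ for $1 \le j \le m$ and $1 \le i \le k_j$, so that $g^{(j,i)}$ has pivot value $2^{j-1}$ at position $\ell(j,i) := k_1+\cdots+k_{j-1}+i$. For each such position I will use the correction
$$c_\ell := 2^{m-j}\, g^{(j,i)} \in C.$$
The scaling lifts the pivot value to $2^{m-1}$ and forces every entry of $c_\ell$ into $\{0,2^{m-1}\}$ modulo $2^m$; moreover, the upper-triangular shape of the standard form guarantees that $c_\ell$ vanishes on every position among the first $k_1+\cdots+k_m$ that lies strictly before $\ell$.

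Next I need $c_\ell \in C_0$. Because both $2^{m-1}u$ and $c_\ell$ take entries only in $\{0,2^{m-1}\}$, every summand of $\langle 2^{m-1}u, c_\ell\rangle$ is either $0$ or $2^{2m-2}$, which reduces to $0$ modulo $2^m$ for every $m \ge 2$. With this in hand, I scan $\ell = 1,2,\ldots,k_1+\cdots+k_m$ in order and add $c_\ell$ whenever the current value at position $\ell$ equals $2^{m-1}$. The triangular zero pattern of the $c_\ell$ ensures each step clears position $\ell$ without reintroducing a nonzero value at any previously cleared coordinate, and every intermediate vector keeps its entries in $\{0,2^{m-1}\}$; the end product $2^{m-1}u'$ lies in the same $C$-coset as $2^{m-1}u$, hence is outside $C$.

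The delicate point I expect to require the most attention is preserving the extra parity hypothesis of Theorem \ref{TmDoubling}. It is vacuous for $m \ge 3$, so only $m = 2$ matters, and there the parity shift in $n_2$ after adding $c_\ell$ matches the parity of $n_2(c_\ell)$. For $c_\ell = 2g^{(1,i)}$ from block $1$, $n_2(c_\ell)$ equals the Hamming weight of $g^{(1,i)} \bmod 2$, which is even by the self-orthogonality of the binary residue code of $C$. For $c_\ell = g^{(2,i)}$ from block $2$, the Type II property forces $wt_E(c_\ell) = 4 n_2(c_\ell)$ to be divisible by $4k = 8$, so $n_2(c_\ell)$ is again even. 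Once both checks are in place, $2^{m-1}u'$ meets all hypotheses of Theorem \ref{TmDoubling} and yields the same $\widetilde{C}$, which proves the claim.
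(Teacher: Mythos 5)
Your proposal is correct and follows essentially the same route as the paper: the paper asserts the existence of a unique codeword $2^{m-1}v\in C$ with entries in $\{0,2^{m-1}\}$ that agrees with $2^{m-1}u$ on the first $k_1+\cdots+k_m$ coordinates and then replaces $2^{m-1}u$ by $2^{m-1}u-2^{m-1}v$, while you construct this correcting codeword explicitly by triangular elimination using the scaled standard-form rows $c_\ell=2^{m-j}g^{(j,i)}$. Your additional check that the $m=2$ parity hypothesis of Theorem \ref{TmDoubling} is preserved under the correction is a worthwhile detail that the paper's proof leaves implicit.
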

\begin{proof}
For every $2^{m-1}u\in \mathbb{Z}_{2^m}^{n}\setminus C$ satisfying the conditions of Theorem \ref{TmDoubling}, there exists a unique codeword $2^{m-1}v\in C$ with all coordinates equal to $0$ or $2^{m-1}$ such that $2^{m-1}u$ coincides
with $2^{m-1}v$ on the first $k_1+k_2+\cdots +k_m$ entries. Then $C_0 \oplus \left\langle 2^{m-1}u\right\rangle=C_0 \oplus \left\langle 2^{m-1}u-2^{m-1}v\right\rangle.$
\end{proof}

Now, we generalize the statement of   [\cite{40doubling}, Theorem 5] to Type II $\mathbb{Z}_{2^m}$-codes.

\begin{thm}\label{genMat}
Let $m\geq 2.$ Let $C$ be a Type II $\mathbb{Z}_{2^m}$-code of length $n$ and type $(k_1, k_2,\dots, k_m).$
Let $G$ be a generator matrix of $C$ in standard form and $G_i$ the $i^{th}$ row of $G.$
Let $2^{m-1}u\in \mathbb{Z}_{2^m}^{n}\setminus C$ be a codeword with zeroes on the first $k_1+k_2+\cdots +k_m$ coordinate positions such that $n_{2^{m-1}}(2^{m-1}u)$ is even if $m=2.$
Let $B=\{G_1,\dots,G_{k_1+k_2+\cdots +k_m}\}.$ The following process yields a generator matrix $\widetilde{G}$ of the $\mathbb{Z}_{2^m}$-code $\widetilde{C}$
obtained from $C$ and $2^{m-1}u$ by the doubling method.
\begin{enumerate}
\item[]{Step 1:} Let $B_E=\{G_i \in B\,|\,\left\langle G_i,2^{m-1}u\right\rangle=0\}$ and $B_O=B\setminus B_E.$
\item[]{Step 2:} Pick $G_i\in B_O$ arbitrarily. Define $B_O'=\{G_i+G_j\,|\,G_j\in B_O\}.$
\item[]{Step 3:} Let $\widetilde{G}$ be a matrix whose rows are the elements of the set $B_O'\cup B_E\cup \{2^{m-1}u\}.$
\end{enumerate}
The resultant code $\widetilde{C}$ is of type 
$$\left\{
\begin{array}{cc}
	(k_1-1, k_2+2),& \text{if}\ m=2,\\
	(k_1-1, k_2+1,k_3+1),& \text{if}\ m=3,\\
	(k_1-1, k_2+1,k_3,\dots, k_{m-1}, k_m+1),& \text{if}\ m\geq 4.
\end{array}\right.$$
The code $\widetilde{C}$ is independent of the choice of $G_i$ in Step 2.
\end{thm}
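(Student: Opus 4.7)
The plan is to verify in order: (1) every row of $\widetilde{G}$ lies in $\widetilde{C} = C_0 \oplus \langle 2^{m-1}u\rangle$; (2) the rows generate $\widetilde{C}$; (3) $\widetilde{C}$ has the claimed type; and (4) $\widetilde{C}$ is independent of the choice of $G_i \in B_O$. For (1): $B_E \subseteq C_0$ by definition, and $2^{m-1}u \in \widetilde{C}$ trivially. For any $G_j \in B_O$, the inner product $\langle G_j, 2^{m-1}u\rangle$ reduces to $2^{m-1}\sum_{k:u_k=1}(G_j)_k \bmod 2^m$, which takes values in $\{0, 2^{m-1}\}$ and, by the definition of $B_O$, must equal $2^{m-1}$; hence $\langle G_i + G_j, 2^{m-1}u\rangle = 2^m \equiv 0$, so $B_O' \subseteq C_0$.

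For (2), since Theorem \ref{TmDoubling} gives $|\widetilde{C}| = 2|C_0|$, it suffices to show that $B_E \cup B_O'$ spans $C_0$. Given any $v = \sum_{G_j \in B} c_j G_j \in C_0$, I rewrite the $B_O$-part as $\sum_{G_j \in B_O} c_j(G_i + G_j) - \bigl(\sum_{G_j \in B_O} c_j\bigr) G_i$. The equation $0 = \langle v, 2^{m-1}u\rangle = 2^{m-1}\sum_{G_j \in B_O} c_j$ in $\mathbb{Z}_{2^m}$ forces $\sum_{G_j \in B_O} c_j$ to be even, so the stray $G_i$-term becomes a multiple of $2G_i = G_i + G_i \in B_O'$; the other summands already lie in the span of $B_E \cup B_O'$.

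For (3), a key structural observation is that any $G_j$ in the $r$-th block of the standard form has entries in $2^{r-1}\mathbb{Z}_{2^m}$, so $\langle G_j, 2^{m-1}u\rangle \in 2^{m+r-2}\mathbb{Z}_{2^m}$, which vanishes for $r \geq 2$. Hence $B_O$ is contained entirely in the first block and the chosen $G_i$ has order $2^m$. To pin down the type I compute $|\widetilde{C}^{(2^j)}|$ for each $j$. For $j \leq m-1$, the congruence $2^{m-1}u \equiv 0 \pmod{2^j}$ gives $\widetilde{C}^{(2^j)} = C_0^{(2^j)}$; moreover, every $c \in C \cap 2^j\mathbb{Z}_{2^m}^n$ lies in $C_0$ (its inner product with $u$ is a multiple of $2^j$, hence even), so the mod-$2^j$ reduction induces an isomorphism $C^{(2^j)}/C_0^{(2^j)} \cong C/C_0 \cong \mathbb{Z}_2$. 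This yields $|\widetilde{C}^{(2^j)}| = |C^{(2^j)}|/2$ for $1 \leq j \leq m-1$ and $|\widetilde{C}^{(2^m)}| = |C|$. A short arithmetic check using the formula $|\widetilde{C}^{(2^j)}| = \prod_{i=1}^j (2^{j-i+1})^{l_i}$ then recovers the claimed type in each of the three cases $m = 2$, $m = 3$, and $m \geq 4$.

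For (4): if $G_{i'}$ is a second choice in $B_O$, then $G_{i'} - G_i = (G_i + G_{i'}) - 2G_i$ lies in $\langle\widetilde{G}\rangle$, so each alternative row satisfies $G_{i'} + G_j = (G_{i'} - G_i) + (G_i + G_j) \in \langle\widetilde{G}\rangle$; by symmetry the two matrices generate the same code. I anticipate the main technical obstacle to be executing the residue-size bookkeeping in (3) cleanly enough that the three cases of $m$ drop out from a single uniform computation rather than being handled separately.
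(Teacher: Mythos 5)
Your proposal is correct, and its skeleton (rows of $\widetilde{G}$ lie in $C_0\oplus\langle 2^{m-1}u\rangle$, they span it, then the type, then independence) matches the paper's, but the two middle steps are handled differently. Where the paper merely asserts $\langle B_O'\cup B_E\rangle=\{v\in C\mid \langle 2^{m-1}u,v\rangle=0\}$, you actually prove it, via the parity of $\sum_{G_j\in B_O}c_j$ and the observation that the stray term $\bigl(\sum c_j\bigr)G_i$ is a multiple of $2G_i\in B_O'$. For the type, the paper reads it off structurally from the new generating set (the rows $G_t$ with $t>k_1$ lie in $B_E$, the $G_i+G_j$ with $j\neq i$ keep unit entries, $2G_i$ is the unique all-even new row, and $2^{m-1}u$ is appended), whereas you determine it invariantly from the cardinalities $|\widetilde{C}^{(2^j)}|$: index $2$ of $C_0^{(2^j)}$ in $C^{(2^j)}$ for $j\le m-1$ (via $C\cap 2^j\mathbb{Z}_{2^m}^n\subseteq C_0$), $|\widetilde{C}|=|C|$, and the triangular system $\log_2|\widetilde{C}^{(2^j)}|=\sum_{i\le j}(j-i+1)l_i$. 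This buys a uniform computation from which the cases $m=2,3,\ge 4$ all drop out, at the cost of justifying the residue-size formula and that these sizes determine the type. Your independence identity is the paper's in disguise, since $-2G_i\equiv(2^{m-1}-1)\,2G_i \pmod{2^m}$. In a full write-up, also record that $B_O\neq\emptyset$ (needed for Step 2; it holds because $2^{m-1}u\notin C=C^\perp$), which the paper states explicitly.
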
 

\begin{proof}
The set $B_O$ is not empty because $C$ is self-dual and $2^{m-1}u\notin C.$
Further, for all $t=k_1+1,\dots,k_1+k_2+\cdots +k_m$ it follows $G_t\in B_E$. For $G_i, G_j\in B_O$, we have $$\left\langle G_i+G_j, 2^{m-1}u\right\rangle= \left\langle G_i, 2^{m-1}u\right\rangle +\left\langle G_j, 2^{m-1}u\right\rangle=0,$$
and $G_i+G_j$ is an codeword with all even coordinates if and only if  $i=j$.  

Note that $\left\langle B_O'\cup B_E \right\rangle =\{v\in C\,|\,\left\langle 2^{m-1}u,v\right\rangle=0\}$.
It follows that $\widetilde{C}$ is of type 
$$\left\{
\begin{array}{cc}
	(k_1-1, k_2+2),& \text{if}\ m=2,\\
	(k_1-1, k_2+1,k_3+1),& \text{if}\ m=3,\\
	(k_1-1, k_2+1,k_3,\dots, k_{m-1}, k_m+1),& \text{if}\ m\geq 4.
\end{array}\right.$$

The independence follows from the fact that $$G_k+G_j = (G_i +G_k)+(G_i +G_j)+(2^{m-1}-1)(G_i +G_i).$$
\end{proof}

\section{Construction of extremal Type II $\mathbb{Z}_8$-codes}
Here we consider an extremal Type II $\mathbb{Z}_8$-code $C$ of length $n \in \{24,32,40\}$ and type $(\frac{n}{2},0,0)$ which has an extremal residue code $C^{(4)}$. Using the doubling method given in the previous chapter, we developed an algorithm for a construction of extremal Type II $\mathbb{Z}_8$-codes $\widetilde{C}$ of length $n$ and type $(\frac{n}{2}-1,1,1).$

Note that $wt_E (x) = n_1(x)+n_7(x)+4(n_2(x)+n_6(x))+9(n_3(x)+n_5(x))+16n_4(x),$ for $x\in \mathbb{Z}_{8}^n.$

\begin{thm}\label{doublingZ8}
Let $n \in \{24,32,40\}$. Denote by $S_i(w)$ the set of positions with the element $i\in\mathbb{Z}_8$ in $w\in \mathbb{Z}_8^n.$ Let $C$ be an extremal Type II $\mathbb{Z}_8$-code of length $n$ and type $(k_1,k_2,k_3)$  
where $C^{(4)}$ is extremal.
Suppose $4u\in\mathbb{Z}_8^{n}$ is a codeword with all coordinates equal to $0$ or $4$ such that $S_4(4u)\subseteq \{k_1+k_2+k_3+1,\dots,n\},$ where $|S_4(4u)|\geq 2.$ 
If there is no codeword $v$ of $C$ that satisfies any of the following conditions:
\begin{enumerate}
\item $S_3(v)\cup S_4(v)\cup S_5(v)\subseteq S_4(4u)\subseteq S_2(v)\cup S_3(v)\cup S_4(v)\cup S_5(v)\cup S_6(v)$ and $$wt_E(v\ (\text{mod}\ 4))=16,$$
\item $|S_4(4u)\setminus S_4(v)|+|S_4(v)\setminus S_4(4u)|=1$ and $wt_E(v\ (\text{mod}\ 4))=0,$
\end{enumerate}
then the Type II $\mathbb{Z}_8$-code $\widetilde{C}$ generated by $4u$ and $C$ using the doubling method is extremal. These choices of $4u$ are the only candidates for the code $C$ in the doubling method which lead to an extremal code.
\end{thm}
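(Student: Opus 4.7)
The plan is to combine extremality of $C$ and of its residue $C^{(4)}$ with a coordinate-by-coordinate comparison of $wt_E(v)$ and $wt_E(v+4u)$, then rearrange the resulting expression as a sum of non-negative integers in which the two failure modes~(1) and~(2) are forced upon us.

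First I would verify that $\widetilde{C}$ is itself a Type~II $\mathbb{Z}_8$-code by Theorem~\ref{TmDoubling}: for $k=4$ the parity conditions on $n_k(ku)$ are vacuous, and $4u \notin C$ because the standard-form presentation of $C$ makes any codeword supported on $\{k_1+k_2+k_3+1,\dots,n\}$ vanish, whereas $|S_4(4u)| \ge 2$. Hence every Euclidean weight in $\widetilde{C}$ is a multiple of $16$, and failure of extremality is equivalent to the existence of a codeword of weight exactly $16$. The codewords of $\widetilde{C}$ take the shape $v$ or $v+4u$ with $v \in C_0$: the former satisfy $wt_E(v) \ge 32$ by extremality of $C$ when $v \neq 0$, and $v=0$ gives $wt_E(4u)=16|S_4(4u)| \ge 32$ by hypothesis. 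So the problem reduces to ruling out $wt_E(v+4u)=16$ for nonzero $v \in C_0$.

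For such $v$, I would compute the single-coordinate change from the $\mathbb{Z}_8$ contribution table for $v_i \mapsto v_i+4 \pmod 8$. Writing $T:=S_4(4u)$, $T_v^c := |T \cap S_c(v)|$, and $S_{\pm j}(v) := S_j(v) \cup S_{8-j}(v)$, the change is
\[
wt_E(v+4u) - wt_E(v) \;=\; 16T_v^0 + 8T_v^{\pm 1} - 8T_v^{\pm 3} - 16T_v^4.
\]
Combining this with the residue identity $wt_E(v) = wt_E(\bar v) + 8|S_{\pm 3}(v)| + 16|S_4(v)|$, where $\bar v = v \bmod 4$, and splitting $|S_{\pm 3}(v)| = T_v^{\pm 3} + |S_{\pm 3}(v) \setminus T|$ and $|S_4(v)| = T_v^4 + |S_4(v) \setminus T|$, the signed contributions cancel and collapse to
\[
wt_E(v+4u) \;=\; wt_E(\bar v) + 8|S_{\pm 3}(v)\setminus T| + 16|S_4(v)\setminus T| + 16|T\cap S_0(v)| + 8|T\cap S_{\pm 1}(v)|,
\]
a sum of five non-negative integers. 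This is the key identity of the proof.

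Extremality of $C^{(4)}$ forces $wt_E(\bar v) \in \{0\} \cup \{16,24,\dots\}$, so equalling $16$ in the above sum permits only two cases. If $wt_E(\bar v)=16$, every other term must vanish, which translates exactly into the double inclusion of Condition~(1); if $wt_E(\bar v)=0$, then $v \in \{0,4\}^n$ automatically kills the $\pm 1$ and $\pm 3$ terms and leaves $|S_4(v) \setminus T| + |T \setminus S_4(v)| = 1$, which is Condition~(2). In either case $v$ lies automatically in $C_0$: trivially for Condition~(2), and for Condition~(1) via the Type~II congruence $wt_E(\bar v) + 8|S_{\pm 3}(v)| \equiv 0 \pmod{16}$ combined with $\langle 4u, v\rangle \equiv 4(|S_3(v)| + |S_5(v)|) \pmod 8$. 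Both directions of the theorem then follow. The main obstacle is engineering the five-term non-negative decomposition; once it is in hand, the extremal-residue hypothesis rules out intermediate values of $wt_E(\bar v)$ and the dichotomy between~(1) and~(2) is forced.
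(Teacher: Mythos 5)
Your proposal is correct and follows essentially the same route as the paper: both reduce to weight-$16$ candidates of the form $v+4u$ with $v\in C_0$, decompose $wt_E(v+4u)$ as $wt_E(v\ (\text{mod}\ 4))$ plus non-negative multiples of $8$ and $16$, invoke extremality of $C^{(4)}$ to force $wt_E(v\ (\text{mod}\ 4))\in\{0\}\cup\{16,24,\dots\}$, and match the two surviving cases with conditions (1) and (2). Your treatment of the ``only candidates'' direction (explicitly checking $v\in C_0$ via the Type II congruence, and that $4u\notin C$) is in fact more detailed than the paper's, which settles that part in a single sentence.
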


\begin{proof}  
It follows from Theorem \ref{nule} that the choices of $4u$ can be limited to codewords with zeroes on  the first $k_1+k_2+k_3$ coordinates.
Let us assume that the code $\widetilde{C}$ is not extremal. Then it contains a codeword of Euclidean weight $16$ of the form  $w=v+4u,$ where $v\in C$ is such a codeword that $\left\langle v,4u\right\rangle=0,$ and 
$$wt_E(w)=W+8(n_3(w)+n_5(w))+16n_4(w),$$
where $W=n_1(w)+n_7(w)+n_3(w)+n_5(w)+4(n_2(w)+n_6(w)).$
It holds $W=0$ or $W\geq 16,$ since $C^{(4)}$ is extremal.
There are three cases to consider.\\
Case 1: $W>16.$\\
Then $wt_E(w)\geq 32.$\\
Case 2: $W=16.$\\
For $wt_E(w)$ to be equal to $16,$ $n_3(w)=n_4(w)=n_5(w)=0,$ which is impossible because of the first condition.\\
Case 3: $W=0.$\\
This condition implies that $w$ and $v$ have all coordinates equal to $0$ or $4.$
Then, for $wt_E(w)=16n_4(w)$ to be equal to $16,$   $n_4(w)=1,$ which is impossible because of the second condition.

The resulting choices for $4u$ are the only candidates for the code $C$ in the doubling method, since the conditions of the theorem exclude all choices that lead to a code $\widetilde{C}$ which is not extremal.
\end{proof}

For an extremal Type II $\mathbb{Z}_8$-code $C$  of length $n \in \{24,32,40\}$ and type $(\frac{n}{2},0,0)$ which has an extremal residue code $C^{(4)},$ the next algorithm returns all unsuitable candidates $4u$,  i.e., the candidates for which  the application of the doubling method leads to a Type II $\mathbb{Z}_8$-code $\widetilde{C}$ which is not extremal. Thus, performing the given steps will find all possible candidates $4u$ for code $C$ to produce a new extremal Type II $\mathbb{Z}_8$-code $\widetilde{C}$ by the doubling method.\\

 {\bf Algorithm C} \\\label{doubling8alg}
Let $n \in \{24,32,40\}.$ Denote by $S_i(w)$ the set of positions with the element $i\in\mathbb{Z}_8$ in $w\in \mathbb{Z}_8^n.$  Let $C$ be an extremal Type II $\mathbb{Z}_8$-code of length $n$ and type $(\frac{n}{2},0,0),$ where $C^{(4)}$ is extremal, with the generator matrix 
$G=\left[\begin{tabular}{cc}
$I_{\frac{n}{2}}$&$A$
\end{tabular}\right]$ in the standard form.

\begin{itemize}
\item[1.] Let $v=(v_1,\dots,v_n)\in C^{(4)}$ be a codeword of Euclidean weight $16.$
\item[1.2.] Let $F_v=\{v_1,\dots,v_{\frac{n}{2}}\},\ A_v=S_2(v)\cap F_v$ and $B_v=S_3(v)\cap F_v.$ 
\item[1.3.] Repeat the following steps on all $A\subseteq A_v:$
\begin{itemize}
\item[1.3.1.] Calculate $v'=v+4s_A+4s_{B_v},$ where $s_A$ is the sum of rows in the generator matrix $G$ of $C$ with row indices in $A$ and $s_{B_v}$ is the sum of rows in $G$ with row indices in $B_v.$
\item[1.3.2.] Let $$O_{v'}=(S_2(v')\cup S_6(v'))\cap \left\{\frac{n}{2}+1,\dots,n\right\},$$ 
$$P_{v'}=S_4(v')\cap \left\{\frac{n}{2}+1,\dots,n\right\},$$
$$Q_{v'}=(S_3(v')\cup S_5(v'))\cap \left\{\frac{n}{2}+1,\dots,n\right\}.$$
\item[1.3.3.] Let $\mathcal{B}$ be the collection of all sets  $$B=O\cup P_{v'}\cup Q_{v'},\ O\subseteq O_{v'},$$ where $|B|\geq 2.$
 \end{itemize}
\item[2.] For all $i\in \{1,\dots,\frac{n}{2}\},$ do the following.
\begin{itemize}
\item[2.1.] Let $O_i=S_4(4G_i)\cap \left\{\frac{n}{2}+1,\dots,n\right\},$ where $G_i$ is the $i^{th}$ row of $G.$
\item[2.2.] Include all  $O_i$ such that $|O_i|\geq 2$ in $\mathcal{B}.$
\end{itemize}
\end{itemize}

\noindent Our method of construction is based on the following theorem.

\begin{thm}\label{TmAlgC}
Let $n \in \{24,32,40\}.$ Denote by $S_i(w)$ the set of positions with the element $i\in\mathbb{Z}_8$ in $w\in \mathbb{Z}_8^n.$ Let $C$ be an extremal Type II $\mathbb{Z}_8$-code of length $n$ and type $(\frac{n}{2},0,0),$ where $C^{(4)}$ is extremal.
Furthermore, let $\mathcal{S}$ be the collection of all $S\subseteq\left\{\frac{n}{2}+1,\dots,n\right\}$ such that $|S|\geq 2.$
Then $\mathcal{G}=\mathcal{S}\setminus \mathcal{B}$ is the set of all possible $S_4(4u)$ for the code $C$ in the doubling method which lead to an extremal Type II $\mathbb{Z}_8$-code $\widetilde{C}$ of length $n$ and type $(\frac{n}{2}-1,1,1),$ where $\mathcal{B}$ is the set obtained by applying Algorithm C.
\end{thm}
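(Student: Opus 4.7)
The strategy is to reduce everything to Theorem~\ref{doublingZ8}, which characterizes the candidates $4u$ producing an extremal $\widetilde{C}$ as exactly those for which no $v\in C$ satisfies condition~$1$ or $2$. By Theorem~\ref{nule} it suffices to consider $S_4(4u)\subseteq\{\tfrac{n}{2}+1,\dots,n\}$, and the cases $|S_4(4u)|\leq 1$ are automatically excluded ($|S_4(4u)|=0$ gives $4u\in C$; $|S_4(4u)|=1$ places the codeword $4u$ itself, of Euclidean weight $16$, into $\widetilde{C}$). Thus $\mathcal{S}$ is the right ambient set, and it remains to show that $\mathcal{B}$ is precisely the collection of $S_4(4u)\in\mathcal{S}$ for which some $v\in C$ meets condition~$1$ or $2$.

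For Part~$1$ (condition~$1$), the plan is to use the fact that since $C$ has type $(\tfrac{n}{2},0,0)$ with generator matrix $G=[I_{n/2}\,|\,A]$, every $\bar v\in C^{(4)}$ has exactly $2^{n/2}$ lifts to $C$, of the form $v=v_0+4s_S$ for $S\subseteq\{1,\dots,\tfrac{n}{2}\}$, where $v_0$ is the canonical lift with first-half entries in $\{0,1,2,3\}$ and $s_S=\sum_{i\in S}G_i$. A coordinate-by-coordinate inspection of the first $\tfrac{n}{2}$ positions shows that the requirement $(S_3(v)\cup S_4(v)\cup S_5(v))\cap\{1,\dots,\tfrac{n}{2}\}=\emptyset$, which is forced by $S_4(4u)\subseteq\{\tfrac{n}{2}+1,\dots,n\}$ together with condition~$1$, is equivalent to $B_v\subseteq S$ together with the disjointness of $S$ from first-half positions where $\bar v_i\in\{0,1\}$; equivalently, $S=A\cup B_v$ for some $A\subseteq A_v$. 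This recovers exactly Step~1.3 of Algorithm~C. For the resulting lift $v'=\bar v+4s_A+4s_{B_v}$, the remaining clauses of condition~$1$ translate directly into the two-sided inclusion $P_{v'}\cup Q_{v'}\subseteq S_4(4u)\subseteq O_{v'}\cup P_{v'}\cup Q_{v'}$, so that the $S_4(4u)$ certifying condition~$1$ via $v'$ are exactly the sets $O\cup P_{v'}\cup Q_{v'}$ with $O\subseteq O_{v'}$ and $|S_4(4u)|\geq 2$, matching Step~1.3.3.

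For Part~$2$ (condition~$2$), note that a $v\in C$ with $wt_E(v\bmod 4)=0$ has all coordinates in $\{0,4\}$ and so must equal $4\sum_{i\in I}G_i$ for some $I\subseteq\{1,\dots,\tfrac{n}{2}\}$. Since $I\subseteq S_4(v)\cap\{1,\dots,\tfrac{n}{2}\}$ is disjoint from $S_4(4u)\subseteq\{\tfrac{n}{2}+1,\dots,n\}$, we have $|S_4(4u)\triangle S_4(v)|\geq|I|$; combined with the cardinality clause in condition~$2$ and $|S_4(4u)|\geq 2$, the only surviving case is $|I|=1$, giving $v=4G_i$ and $S_4(4u)=S_4(4G_i)\cap\{\tfrac{n}{2}+1,\dots,n\}=O_i$ with $|O_i|\geq 2$. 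This is exactly Step~$2$.

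Putting the two parts together, $\mathcal{B}$ is precisely the family of $S_4(4u)\in\mathcal{S}$ for which some $v\in C$ satisfies condition~$1$ or $2$, so by Theorem~\ref{doublingZ8} the complement $\mathcal{G}=\mathcal{S}\setminus\mathcal{B}$ is the full collection of $S_4(4u)$ producing an extremal $\widetilde{C}$, whose type $(\tfrac{n}{2}-1,1,1)$ is read off from Theorem~\ref{genMat} with $m=3$. The main delicate point will be the lift bookkeeping in Part~$1$: one must verify carefully that the algorithm's parameterization $S=A\cup B_v$, $A\subseteq A_v$, exhausts the lifts of $\bar v$ compatible with condition~$1$, and that $O_{v'}, P_{v'}, Q_{v'}$ really do encode the $\{2,6\}$-, $\{4\}$-, and $\{3,5\}$-positions of $v'$ on the second half.
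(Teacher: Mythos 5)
Your proposal is correct and takes essentially the same route as the paper's own proof: you verify that Step 1 of Algorithm C enumerates exactly the codewords of $C$ certifying condition 1 of Theorem \ref{doublingZ8} (via the lifts $v_0+4s_{A\cup B_v}$, $A\subseteq A_v$, of weight-$16$ codewords of $C^{(4)}$) and that Step 2 captures exactly the condition-2 certificates $v=4G_i$, then conclude with Theorems \ref{nule}, \ref{doublingZ8} and \ref{genMat}. Your write-up is in fact more explicit than the paper's terse argument (the lift parameterization, the exclusion of $|S_4(4u)|\leq 1$, and the symmetric-difference argument forcing $v=4G_i$ are only asserted there), and your reading of $v'$ in Step 1.3.1 as an element of $C$ is the correct interpretation needed for $P_{v'}$ and $Q_{v'}$ to be meaningful.
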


\begin{proof} The first condition in Theorem \ref{doublingZ8} is checked in Step 1. of Algorithm C. Since the condition requires that $S_3(v)\cup S_4(v)\cup S_5(v)\subseteq S_4(4u),$ the coefficients of the rows of  $G$ in the linear combination of $v$ cannot be $3, 4$ or $5.$ Step 1.3.1. generates all such codewords $v'$ with $wt_E(v'\ (\text{mod}\ 4))=16.$ All subsets $B=S_4(4u)$ satisfying the first condition are included in $\mathcal{B}$ in Step 1.3.3.

The second condition of Theorem \ref{doublingZ8}, implies that $v$ is one of the rows of $G$ with coefficient $4.$
These codewords are considered in Step 2.
 
\end{proof}

\subsection{New extremal Type II $\mathbb{Z}_8$-codes of length $32$} \label{results}

Six inequivalent extremal Type II $\mathbb{Z}_8$-codes of length $32$ are known: $C_{8,32,i},\ i=1,\dots,5$ from \cite{Hnovi} and  $D_{32}$ from \cite{6810}.   Since all of them are of type  $(16,0,0)$, we investigate the possibility of constructing new extremal Type II $\mathbb{Z}_8$-codes of length $32$  by using the introduced doubling method. The result of our analysis is given in Proposition \ref{newZ8}.

\begin{prop} \label{newZ8}
There are at least $10$ inequivalent extremal Type II $\mathbb{Z}_8$-codes of length $32$ and type $(15,1,1).$
\end{prop}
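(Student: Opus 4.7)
The plan is to prove Proposition \ref{newZ8} by explicit computer-assisted construction followed by an invariant-based inequivalence argument, exploiting the machinery built up in Theorem \ref{doublingZ8}, Algorithm C, and Theorem \ref{TmAlgC}.

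First, I would enumerate the starting material. The six known inequivalent extremal Type II $\mathbb{Z}_8$-codes of length $32$ and type $(16,0,0)$, namely $C_{8,32,i}$ for $i=1,\dots,5$ and $D_{32}$, are explicitly available. For each I would check (using Magma) which has an extremal $\mathbb{Z}_4$-residue code $C^{(4)}$, since Theorem \ref{doublingZ8} and Theorem \ref{TmAlgC} require this hypothesis. The codes satisfying the hypothesis form the input to the construction.

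Next, for each such starting code $C$ with generator matrix $G=[I_{16}\ |\ A]$ in standard form, I would run Algorithm C to produce the set $\mathcal{B}$ of unsuitable candidates $4u$, and then take the complement $\mathcal{G}=\mathcal{S}\setminus\mathcal{B}$ inside the collection of all subsets of $\{17,\dots,32\}$ of size at least $2$. By Theorem \ref{TmAlgC}, every $S\in\mathcal{G}$ yields via the doubling method (applied through Theorem \ref{genMat} with $m=3$) an extremal Type II $\mathbb{Z}_8$-code $\widetilde{C}$ of length $32$ and type $(15,1,1)$. Collecting these over all valid starting codes produces a large family of candidate extremal codes (the introduction indicates $68850$ such codes are obtained in total).

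Finally, to establish inequivalence, I would use a computable invariant that is preserved under the equivalence relation defined for $\mathbb{Z}_{2k}$-codes (coordinate permutation and sign changes). A natural and effective choice is the weight distribution of the binary residue code $\widetilde{C}^{(2)}$: binary residue codes are preserved up to permutation equivalence under $\mathbb{Z}_8$-equivalence, so two $\mathbb{Z}_8$-codes whose binary residues have different weight distributions must themselves be inequivalent. I would compute the weight distribution of $\widetilde{C}^{(2)}$ for each constructed $\widetilde{C}$, sort the outputs, and verify that at least ten distinct weight distributions occur; exhibiting one generator matrix representative per class (as promised in the introduction) then certifies ten inequivalent extremal Type II $\mathbb{Z}_8$-codes of length $32$ and type $(15,1,1)$.

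The main obstacle is not conceptual but computational: the algorithm must enumerate and filter subsets of $\{17,\dots,32\}$ (up to $2^{16}$ candidates per starting code) against all short codewords of $C^{(4)}$ satisfying the two conditions of Theorem \ref{doublingZ8}. I would therefore implement the check codeword-by-codeword using the structure in Steps 1.2--1.3.3 of Algorithm C rather than brute-forcing all $(4u,v)$ pairs, and rely on the intermediate step of pruning by $|B|\geq 2$ to keep $\mathcal{B}$ manageable. Once $\mathcal{G}$ is in hand, the remaining work—building $\widetilde{C}$ via Theorem \ref{genMat} and computing binary residue weight distributions—is straightforward in Magma.
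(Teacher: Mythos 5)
Your proposal matches the paper's proof in essence: restrict to the starting codes among $C_{8,32,1},\dots,C_{8,32,5},D_{32}$ whose $\mathbb{Z}_4$-residue code is extremal (the paper finds these are $C_{8,32,1}$, $C_{8,32,2}$ and $D_{32}$), run Algorithm C and Theorems \ref{TmAlgC} and \ref{genMat} to build the $68850$ extremal codes of type $(15,1,1)$, and separate them into at least ten equivalence classes via the weight distributions of their binary residue codes, exhibiting one representative per class. Your explicit remark that the binary residue weight distribution is invariant under coordinate permutations and sign changes is a small but welcome justification that the paper leaves implicit; otherwise the argument is the same.
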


\begin{proof}
The extremal Type II $\mathbb{Z}_8$-codes $C_{8,32,1}$ and  $C_{8,32,2}$ are of type $(16,0,0)$ and length $32$ and have extremal $\mathbb{Z}_4$-residue codes. So, we can apply Theorem \ref{TmAlgC}. We applied Algorithm C and found $23067$ candidates $4u$ for a construction of extremal Type II $\mathbb{Z}_8$-codes of type $(15,1,1)$ and length $32$ from $C_{8,32,1}$ by doubling method. Also, we found $22818$ candidates $4u$ for a construction of extremal Type II $\mathbb{Z}_8$-codes of type $(15,1,1)$ and length $32$ from $C_{8,32,2}$ by doubling method. 

The extremal Type II $\mathbb{Z}_8$-codes $C_{8,32,i},\ i=3,4,5$ are of type $(16,0,0)$ and length $32$ and have $\mathbb{Z}_4$-residue codes of minimum Euclidean weight $8.$ So, we cannot apply Theorem \ref{TmAlgC} to obtain new extremal Type II $\mathbb{Z}_8$-codes from $C_{8,32,i},\ i=3,4,5$ using the doubling method.

Further, the extremal Type II $\mathbb{Z}_8$-code $D_{32}$ is of type $(16,0,0)$ and length $32$ and it has an extremal $\mathbb{Z}_4$-residue code. So, we can apply Theorem \ref{TmAlgC}. We applied Algorithm C and found $22965$ candidates $4u$ for a construction of extremal Type II $\mathbb{Z}_8$-codes of type $(15,1,1)$ and length $32$ from $D_{32}$ by doubling method.

We use Theorem \ref{genMat} to obtain the generator matrices for  
the $68850$ constructed extremal Type II $\mathbb{Z}_8$-codes of type $(15,1,1)$ and length $32.$ 
Using Magma (\cite{magma}), we calculated the weight distributions of the corresponding $68850$ binary residue codes and obtained that, with respect to the weight distribution of their binary residue codes, all constructed extremal Type II $\mathbb{Z}_8$-codes are distributed into $10$ classes.  
The corresponding weight distributions are given in Table \ref{tablica}.

\begin{table}[h] 
	\centering
		\begin{tabular}{|c|c|c|c|c|c|c|c|c|}
		\hline
			&$i$& 0&8&12&16&20&24&32\\
			\hline
			${C_1}^{(2)}$&$W_i$& 1& 316&6912 &18310 &6912 &316 & 1\\
			\hline
			${C_2}^{(2)}$&$W_i$& 1&332 &6848 & 18406&6848 &332 & 1\\
			\hline
			${C_3}^{(2)}$&$W_i$&1&337 &6888 &18259 &7000 & 283&0 \\
			\hline
			${C_4}^{(2)}$&$W_i$&1&305 &6952 &18259 &6936 &315 &0\\
			\hline
			${C_5}^{(2)}$&$W_i$&1&308 &6944 &18262 &6944 &308 &1 \\
			\hline
			${C_6}^{(2)}$&$W_i$&1&300 &6976 &18214 &6976 &300 &1 \\
			\hline
			${C_7}^{(2)}$&$W_i$&1&364 &6720 &18598 &6720 &364 &1 \\
			\hline
			${C_8}^{(2)}$&$W_i$&1&380 &7168 &17670 &7168 &380 &1 \\
			\hline
			${C_9}^{(2)}$&$W_i$&1&324 &6880 &18358 &6880 &324 &1 \\
			\hline
			${C_{10}}^{(2)}$&$W_i$&1&340 &6816 &18454 &6816 &340 &1\\
			\hline
		\end{tabular}\caption{Weight distributions of the binary residue codes}\label{tablica} 
\end{table}

For each of the obtained weight distribution classes we give the generator matrix in standard form of one extremal Type II $\mathbb{Z}_8$-code of length $32$ and type $(15,1,1)$, namely, the generator matrices for the following class representatives:
$$C_1= {C_{8,32,1}}_0 \oplus \left\langle 4u\right\rangle,\ S_4(4u)=\{ 17, 19, 21, 22 \},$$
$$C_2= {C_{8,32,1}}_0 \oplus \left\langle 4u\right\rangle,\ S_4(4u)=\{ 17, 18, 20, 21 \},$$
$$C_3= {C_{8,32,1}}_0 \oplus \left\langle 4u\right\rangle,\ S_4(4u)=\{ 17, 19, 21 \},$$
$$C_4= {C_{8,32,1}}_0 \oplus \left\langle 4u\right\rangle,\ S_4(4u)=\{ 17, 19, 20, 21, 22 \},$$  
$$C_5= {C_{8,32,1}}_0 \oplus \left\langle 4u\right\rangle,\ S_4(4u)=\{ 17, 18, 19, 20 \},$$  
$$C_6= {C_{8,32,1}}_0 \oplus \left\langle 4u\right\rangle,\ S_4(4u)=\{ 17, 18, 19, 20, 21, 22 \},$$
$$C_7= {C_{8,32,1}}_0 \oplus \left\langle 4u\right\rangle,\ S_4(4u)=\{ 18, 24, 25, 27 \},$$
$$C_8= {C_{8,32,1}}_0 \oplus \left\langle 4u\right\rangle,\ S_4(4u)=\{ 20, 25 \},$$   
$$C_9= {C_{8,32,2}}_0 \oplus \left\langle 4u\right\rangle,\ S_4(4u)=\{ 17, 18, 19, 21 \},$$
$$C_{10}= {C_{8,32,2}}_0 \oplus \left\langle 4u\right\rangle,\ S_4(4u)=\{ 17, 21, 24, 25 \}.$$  

Those are, respectively:
{\scriptsize $$G_1=\left(\begin{array}{c}
	1 0 0 0 0 0 0 0 0 0 0 0 0 0 0 0 3 4 7 6 7 1 6 3 5 6 0 2 0 4 7 4\\
    0 1 0 0 0 0 0 0 0 0 0 0 0 0 0 0 1 7 0 3 2 7 5 6 4 5 6 0 2 0 4 7\\
    0 0 1 0 0 0 0 0 0 0 0 0 0 0 0 0 2 5 7 4 3 6 3 5 1 4 5 6 0 2 0 4\\
    0 0 0 1 0 0 0 0 0 0 0 0 0 0 0 1 0 6 1 5 0 1 1 7 2 6 7 3 1 4 1 5\\
    0 0 0 0 1 0 0 0 0 0 0 0 0 0 0 0 1 3 6 5 3 0 3 6 0 4 1 4 5 6 0 2\\
    0 0 0 0 0 1 0 0 0 0 0 0 0 0 0 1 3 1 3 0 5 1 7 7 4 5 7 7 7 1 5 5\\
    0 0 0 0 0 0 1 0 0 0 0 0 0 0 0 1 2 6 1 1 6 3 2 0 6 3 3 2 4 0 4 3\\
    0 0 0 0 0 0 0 1 0 0 0 0 0 0 0 1 1 1 6 3 7 0 0 3 0 5 1 6 7 5 3 2\\
    0 0 0 0 0 0 0 0 1 0 0 0 0 0 0 0 3 4 7 4 0 2 0 6 7 5 2 3 5 2 5 4\\
    0 0 0 0 0 0 0 0 0 1 0 0 0 0 0 0 2 3 4 7 4 0 2 0 4 7 5 2 3 5 2 5\\
    0 0 0 0 0 0 0 0 0 0 1 0 0 0 0 0 0 2 3 4 7 4 0 2 3 4 7 5 2 3 5 2\\
    0 0 0 0 0 0 0 0 0 0 0 1 0 0 0 1 3 4 6 1 0 5 7 4 4 0 7 5 0 6 2 2\\
    0 0 0 0 0 0 0 0 0 0 0 0 1 0 0 1 1 2 0 0 3 6 2 0 1 3 6 2 2 1 1 0\\
    0 0 0 0 0 0 0 0 0 0 0 0 0 1 0 1 1 4 2 6 6 1 7 3 3 0 1 1 7 3 4 7\\
    0 0 0 0 0 0 0 0 0 0 0 0 0 0 1 0 0 1 0 0 2 4 2 3 3 6 5 3 6 3 4 7\\
    0 0 0 0 0 0 0 0 0 0 0 0 0 0 0 2 2 0 0 4 0 4 6 0 4 2 6 4 6 0 6 2\\
    0 0 0 0 0 0 0 0 0 0 0 0 0 0 0 0 4 0 4 0 4 4 0 0 0 0 0 0 0 0 0 0
\end{array}\right),\
G_2=\left(\begin{array}{c}
	1 0 0 0 0 0 0 0 0 0 0 0 0 0 0 0 3 0 3 2 7 5 6 3 5 6 0 2 0 4 7 4\\
    0 1 0 0 0 0 0 0 0 0 0 0 0 0 0 1 1 0 0 3 4 3 7 1 7 3 3 3 0 3 0 6\\
    0 0 1 0 0 0 0 0 0 0 0 0 0 0 0 0 2 5 7 4 3 6 3 5 1 4 5 6 0 2 0 4\\
    0 0 0 1 0 0 0 0 0 0 0 0 0 0 0 0 3 2 5 7 4 3 6 3 4 1 4 5 6 0 2 0\\
    0 0 0 0 1 0 0 0 0 0 0 0 0 0 0 0 1 7 2 1 3 4 3 6 0 4 1 4 5 6 0 2\\
    0 0 0 0 0 1 0 0 0 0 0 0 0 0 0 0 2 5 3 2 5 7 4 3 6 0 4 1 4 5 6 0\\
    0 0 0 0 0 0 1 0 0 0 0 0 0 0 0 0 1 6 5 7 6 5 7 4 0 6 0 4 1 4 5 6\\
    0 0 0 0 0 0 0 1 0 0 0 0 0 0 0 1 0 6 6 5 1 2 7 2 5 6 3 3 2 4 0 4\\
    0 0 0 0 0 0 0 0 1 0 0 0 0 0 0 1 3 1 3 0 2 2 2 1 2 3 7 6 3 5 1 3\\
    0 0 0 0 0 0 0 0 0 1 0 0 0 0 0 0 2 3 4 7 4 0 2 0 4 7 5 2 3 5 2 5\\
    0 0 0 0 0 0 0 0 0 0 1 0 0 0 0 1 0 7 7 0 1 4 2 5 6 2 4 0 0 6 1 1\\
    0 0 0 0 0 0 0 0 0 0 0 1 0 0 0 1 2 1 6 3 2 7 6 3 1 1 1 2 3 5 7 4\\
    0 0 0 0 0 0 0 0 0 0 0 0 1 0 0 1 0 3 4 6 5 4 1 7 6 4 0 7 5 0 6 2\\
    0 0 0 0 0 0 0 0 0 0 0 0 0 1 0 0 0 4 6 4 6 3 4 7 5 3 6 3 4 7 5 2\\
    0 0 0 0 0 0 0 0 0 0 0 0 0 0 1 1 1 1 4 2 2 2 5 7 1 3 0 1 1 7 3 4\\
    0 0 0 0 0 0 0 0 0 0 0 0 0 0 0 2 0 2 0 0 4 0 4 6 6 4 2 6 4 6 0 6\\
    0 0 0 0 0 0 0 0 0 0 0 0 0 0 0 0 4 4 0 4 4 0 0 0 0 0 0 0 0 0 0 0\\
\end{array}\right),$$
$$G_3=\left(\begin{array}{c}
	1 0 0 0 0 0 0 0 0 0 0 0 0 0 0 1 0 0 3 4 3 7 1 7 3 3 3 0 3 0 6 1\\
    0 1 0 0 0 0 0 0 0 0 0 0 0 0 0 1 2 3 0 1 2 5 0 2 2 2 1 6 5 4 3 4\\
    0 0 1 0 0 0 0 0 0 0 0 0 0 0 0 0 2 5 7 4 3 6 3 5 1 4 5 6 0 2 0 4\\
    0 0 0 1 0 0 0 0 0 0 0 0 0 0 0 0 3 2 5 7 4 3 6 3 4 1 4 5 6 0 2 0\\
    0 0 0 0 1 0 0 0 0 0 0 0 0 0 0 0 1 3 6 5 3 4 3 6 0 4 1 4 5 6 0 2\\
    0 0 0 0 0 1 0 0 0 0 0 0 0 0 0 0 2 5 3 2 5 7 4 3 6 0 4 1 4 5 6 0\\
    0 0 0 0 0 0 1 0 0 0 0 0 0 0 0 0 1 2 1 3 6 5 7 4 0 6 0 4 1 4 5 6\\
    0 0 0 0 0 0 0 1 0 0 0 0 0 0 0 1 1 1 6 3 7 4 0 3 0 5 1 6 7 5 3 2\\
    0 0 0 0 0 0 0 0 1 0 0 0 0 0 0 0 3 4 7 4 0 2 0 6 7 5 2 3 5 2 5 4\\
    0 0 0 0 0 0 0 0 0 1 0 0 0 0 0 0 2 3 4 7 4 0 2 0 4 7 5 2 3 5 2 5\\
    0 0 0 0 0 0 0 0 0 0 1 0 0 0 0 0 0 2 3 4 7 4 0 2 3 4 7 5 2 3 5 2\\
    0 0 0 0 0 0 0 0 0 0 0 1 0 0 0 0 2 0 6 3 0 7 4 0 6 3 4 7 5 2 3 5\\
    0 0 0 0 0 0 0 0 0 0 0 0 1 0 0 1 1 2 0 0 3 6 2 0 1 3 6 2 2 1 1 0\\
    0 0 0 0 0 0 0 0 0 0 0 0 0 1 0 0 0 0 2 0 6 3 4 7 5 3 6 3 4 7 5 2\\
    0 0 0 0 0 0 0 0 0 0 0 0 0 0 1 0 0 1 0 0 2 0 2 3 3 6 5 3 6 3 4 7\\
    0 0 0 0 0 0 0 0 0 0 0 0 0 0 0 2 2 0 0 4 0 4 6 0 4 2 6 4 6 0 6 2\\
    0 0 0 0 0 0 0 0 0 0 0 0 0 0 0 0 4 0 4 0 4 0 0 0 0 0 0 0 0 0 0 0\\
\end{array}\right),\
G_4=\left(\begin{array}{c}
	1 0 0 0 0 0 0 0 0 0 0 0 0 0 0 0 3 4 7 2 7 1 6 3 5 6 0 2 0 4 7 4\\
    0 1 0 0 0 0 0 0 0 0 0 0 0 0 0 1 2 3 0 5 2 1 0 2 2 2 1 6 5 4 3 4\\
    0 0 1 0 0 0 0 0 0 0 0 0 0 0 0 0 2 5 7 4 3 6 3 5 1 4 5 6 0 2 0 4\\
    0 0 0 1 0 0 0 0 0 0 0 0 0 0 0 0 3 2 5 7 4 3 6 3 4 1 4 5 6 0 2 0\\
    0 0 0 0 1 0 0 0 0 0 0 0 0 0 0 1 2 7 6 7 3 2 6 2 6 1 4 2 0 2 7 7\\
    0 0 0 0 0 1 0 0 0 0 0 0 0 0 0 1 3 1 3 0 5 1 7 7 4 5 7 7 7 1 5 5\\
    0 0 0 0 0 0 1 0 0 0 0 0 0 0 0 0 1 2 1 7 6 1 7 4 0 6 0 4 1 4 5 6\\
    0 0 0 0 0 0 0 1 0 0 0 0 0 0 0 0 0 5 6 1 7 6 5 7 2 0 6 0 4 1 4 5\\
    0 0 0 0 0 0 0 0 1 0 0 0 0 0 0 0 3 4 7 4 0 2 0 6 7 5 2 3 5 2 5 4\\
    0 0 0 0 0 0 0 0 0 1 0 0 0 0 0 1 3 7 4 5 4 2 5 4 2 4 0 0 6 1 1 2\\
    0 0 0 0 0 0 0 0 0 0 1 0 0 0 0 0 0 2 3 4 7 4 0 2 3 4 7 5 2 3 5 2\\
    0 0 0 0 0 0 0 0 0 0 0 1 0 0 0 0 2 0 6 7 0 3 4 0 6 3 4 7 5 2 3 5\\
    0 0 0 0 0 0 0 0 0 0 0 0 1 0 0 1 1 2 0 0 3 6 2 0 1 3 6 2 2 1 1 0\\
    0 0 0 0 0 0 0 0 0 0 0 0 0 1 0 1 1 4 2 2 6 1 7 3 3 0 1 1 7 3 4 7\\
    0 0 0 0 0 0 0 0 0 0 0 0 0 0 1 0 0 1 0 4 2 4 2 3 3 6 5 3 6 3 4 7\\
    0 0 0 0 0 0 0 0 0 0 0 0 0 0 0 2 2 0 0 4 0 4 6 0 4 2 6 4 6 0 6 2\\
    0 0 0 0 0 0 0 0 0 0 0 0 0 0 0 0 4 0 4 4 4 4 0 0 0 0 0 0 0 0 0 0\\
\end{array}\right),$$
$$G_5=\left(\begin{array}{c}
	1 0 0 0 0 0 0 0 0 0 0 0 0 0 0 0 3 0 7 2 3 5 6 3 5 6 0 2 0 4 7 4\\
    0 1 0 0 0 0 0 0 0 0 0 0 0 0 0 1 1 0 0 3 4 3 7 1 7 3 3 3 0 3 0 6\\
    0 0 1 0 0 0 0 0 0 0 0 0 0 0 0 0 2 5 7 4 3 6 3 5 1 4 5 6 0 2 0 4\\
    0 0 0 1 0 0 0 0 0 0 0 0 0 0 0 1 3 7 5 3 2 3 0 6 7 7 1 0 4 3 6 7\\
    0 0 0 0 1 0 0 0 0 0 0 0 0 0 0 1 1 4 6 5 5 4 5 1 3 2 6 7 3 1 4 1\\
    0 0 0 0 0 1 0 0 0 0 0 0 0 0 0 0 2 5 3 2 5 7 4 3 6 0 4 1 4 5 6 0\\
    0 0 0 0 0 0 1 0 0 0 0 0 0 0 0 1 1 3 1 3 0 5 1 7 3 4 5 7 7 7 1 5\\
    0 0 0 0 0 0 0 1 0 0 0 0 0 0 0 0 0 1 6 1 3 2 5 7 2 0 6 0 4 1 4 5\\
    0 0 0 0 0 0 0 0 1 0 0 0 0 0 0 0 3 4 7 4 0 2 0 6 7 5 2 3 5 2 5 4\\
    0 0 0 0 0 0 0 0 0 1 0 0 0 0 0 0 2 3 4 7 4 0 2 0 4 7 5 2 3 5 2 5\\
    0 0 0 0 0 0 0 0 0 0 1 0 0 0 0 1 0 7 3 0 5 4 2 5 6 2 4 0 0 6 1 1\\
    0 0 0 0 0 0 0 0 0 0 0 1 0 0 0 1 2 1 6 3 2 7 6 3 1 1 1 2 3 5 7 4\\
    0 0 0 0 0 0 0 0 0 0 0 0 1 0 0 0 0 6 0 2 3 4 7 4 3 6 3 4 7 5 2 3\\
    0 0 0 0 0 0 0 0 0 0 0 0 0 1 0 0 0 4 2 4 2 3 4 7 5 3 6 3 4 7 5 2\\
    0 0 0 0 0 0 0 0 0 0 0 0 0 0 1 1 1 1 0 2 6 2 5 7 1 3 0 1 1 7 3 4\\
    0 0 0 0 0 0 0 0 0 0 0 0 0 0 0 2 0 2 0 0 4 0 4 6 6 4 2 6 4 6 0 6\\
    0 0 0 0 0 0 0 0 0 0 0 0 0 0 0 0 4 4 4 4 0 0 0 0 0 0 0 0 0 0 0 0\\
\end{array}\right),\
G_6=\left(\begin{array}{c}
	 1 0 0 0 0 0 0 0 0 0 0 0 0 0 0 0 3 0 7 2 7 1 6 3 5 6 0 2 0 4 7 4\\
    0 1 0 0 0 0 0 0 0 0 0 0 0 0 0 0 1 3 0 7 2 7 5 6 4 5 6 0 2 0 4 7\\
    0 0 1 0 0 0 0 0 0 0 0 0 0 0 0 1 2 2 7 0 5 2 5 0 4 2 2 1 6 5 4 3\\
    0 0 0 1 0 0 0 0 0 0 0 0 0 0 0 0 3 2 5 7 4 3 6 3 4 1 4 5 6 0 2 0\\
    0 0 0 0 1 0 0 0 0 0 0 0 0 0 0 0 1 7 6 1 3 0 3 6 0 4 1 4 5 6 0 2\\
    0 0 0 0 0 1 0 0 0 0 0 0 0 0 0 0 2 5 3 2 5 7 4 3 6 0 4 1 4 5 6 0\\
    0 0 0 0 0 0 1 0 0 0 0 0 0 0 0 0 1 6 1 7 6 1 7 4 0 6 0 4 1 4 5 6\\
    0 0 0 0 0 0 0 1 0 0 0 0 0 0 0 1 0 6 6 5 1 2 7 2 5 6 3 3 2 4 0 4\\
    0 0 0 0 0 0 0 0 1 0 0 0 0 0 0 0 3 4 7 4 0 2 0 6 7 5 2 3 5 2 5 4\\
    0 0 0 0 0 0 0 0 0 1 0 0 0 0 0 0 2 3 4 7 4 0 2 0 4 7 5 2 3 5 2 5\\
    0 0 0 0 0 0 0 0 0 0 1 0 0 0 0 0 0 2 3 4 7 4 0 2 3 4 7 5 2 3 5 2\\
    0 0 0 0 0 0 0 0 0 0 0 1 0 0 0 0 2 4 6 7 0 3 4 0 6 3 4 7 5 2 3 5\\
    0 0 0 0 0 0 0 0 0 0 0 0 1 0 0 1 0 3 0 6 5 0 1 7 6 4 0 7 5 0 6 2\\
    0 0 0 0 0 0 0 0 0 0 0 0 0 1 0 1 0 1 2 0 0 3 6 2 0 1 3 6 2 2 1 1\\
    0 0 0 0 0 0 0 0 0 0 0 0 0 0 1 1 1 1 0 2 2 6 5 7 1 3 0 1 1 7 3 4\\
    0 0 0 0 0 0 0 0 0 0 0 0 0 0 0 2 0 2 0 0 4 0 4 6 6 4 2 6 4 6 0 6\\
    0 0 0 0 0 0 0 0 0 0 0 0 0 0 0 0 4 4 4 4 4 4 0 0 0 0 0 0 0 0 0 0\\
\end{array}\right),$$
$$G_7=\left(\begin{array}{c}
	 1 0 0 0 0 0 0 0 0 0 0 0 0 0 0 0 0 7 3 6 3 5 6 4 3 6 5 2 0 4 7 4\\
    0 1 0 0 0 0 0 0 0 0 0 0 0 0 0 1 1 6 4 1 6 5 0 3 2 2 2 6 5 4 3 4\\
    0 0 1 0 0 0 0 0 0 0 0 0 0 0 0 0 1 2 7 4 3 6 3 1 1 4 5 6 0 2 0 4\\
    0 0 0 1 0 0 0 0 0 0 0 0 0 0 0 1 3 4 5 5 4 5 1 2 3 6 6 3 1 4 1 5\\
    0 0 0 0 1 0 0 0 0 0 0 0 0 0 0 0 1 5 2 5 7 4 3 3 6 4 0 4 5 6 0 2\\
    0 0 0 0 0 1 0 0 0 0 0 0 0 0 0 0 0 2 3 2 5 7 4 1 7 0 2 1 4 5 6 0\\
    0 0 0 0 0 0 1 0 0 0 0 0 0 0 0 0 0 5 5 3 2 5 7 2 4 6 0 4 1 4 5 6\\
    0 0 0 0 0 0 0 1 0 0 0 0 0 0 0 0 2 4 2 5 3 2 5 1 3 0 6 0 4 1 4 5\\
    0 0 0 0 0 0 0 0 1 0 0 0 0 0 0 1 1 4 7 2 0 4 3 4 6 2 1 1 0 6 4 1\\
    0 0 0 0 0 0 0 0 0 1 0 0 0 0 0 0 1 2 4 7 4 0 2 7 4 7 0 2 3 5 2 5\\
    0 0 0 0 0 0 0 0 0 0 1 0 0 0 0 0 3 0 3 4 7 4 0 6 6 4 7 5 2 3 5 2\\
    0 0 0 0 0 0 0 0 0 0 0 1 0 0 0 0 0 6 2 3 4 7 4 4 4 3 2 7 5 2 3 5\\
    0 0 0 0 0 0 0 0 0 0 0 0 1 0 0 0 3 0 0 2 3 4 7 6 4 6 3 4 7 5 2 3\\
    0 0 0 0 0 0 0 0 0 0 0 0 0 1 0 0 2 4 6 0 2 3 4 4 3 3 1 3 4 7 5 2\\
    0 0 0 0 0 0 0 0 0 0 0 0 0 0 1 0 1 4 4 0 6 0 2 5 7 6 7 3 6 3 4 7\\
    0 0 0 0 0 0 0 0 0 0 0 0 0 0 0 2 2 2 0 4 0 4 6 4 4 2 0 4 6 0 6 2\\
    0 0 0 0 0 0 0 0 0 0 0 0 0 0 0 0 4 0 0 0 0 0 0 4 4 0 4 0 0 0 0 0\\
\end{array}\right),\
G_8=\left(\begin{array}{c}
	1 0 0 0 0 0 0 0 0 0 0 0 0 0 0 1 0 5 7 6 1 5 0 6 3 4 5 5 6 7 3 3\\
    0 1 0 0 0 0 0 0 0 0 0 0 0 0 0 1 3 0 0 7 4 3 7 1 1 3 3 3 0 3 0 6\\
    0 0 1 0 0 0 0 0 0 0 0 0 0 0 0 1 0 6 3 0 1 6 5 0 6 2 2 1 6 5 4 3\\
    0 0 0 1 0 0 0 0 0 0 0 0 0 0 0 1 3 3 1 3 2 3 0 6 7 7 1 0 4 3 6 7\\
    0 0 0 0 1 0 0 0 0 0 0 0 0 0 0 1 3 4 6 5 5 4 5 1 1 2 6 7 3 1 4 1\\
    0 0 0 0 0 1 0 0 0 0 0 0 0 0 0 0 2 5 3 6 5 7 4 3 2 0 4 1 4 5 6 0\\
    0 0 0 0 0 0 1 0 0 0 0 0 0 0 0 1 3 3 1 3 0 5 1 7 1 4 5 7 7 7 1 5\\
    0 0 0 0 0 0 0 1 0 0 0 0 0 0 0 1 1 6 6 1 1 2 7 2 0 6 3 3 2 4 0 4\\
    0 0 0 0 0 0 0 0 1 0 0 0 0 0 0 1 2 5 3 4 6 2 2 1 7 3 7 6 3 5 1 3\\
    0 0 0 0 0 0 0 0 0 1 0 0 0 0 0 1 3 4 0 3 2 0 4 3 6 5 2 5 1 0 6 4\\
    0 0 0 0 0 0 0 0 0 0 1 0 0 0 0 1 2 3 7 0 5 4 2 5 4 2 4 0 0 6 1 1\\
    0 0 0 0 0 0 0 0 0 0 0 1 0 0 0 1 1 1 6 3 2 7 6 3 2 1 1 2 3 5 7 4\\
    0 0 0 0 0 0 0 0 0 0 0 0 1 0 0 1 2 7 4 6 1 4 1 7 4 4 0 7 5 0 6 2\\
    0 0 0 0 0 0 0 0 0 0 0 0 0 1 0 1 0 1 2 0 0 3 6 2 0 1 3 6 2 2 1 1\\
    0 0 0 0 0 0 0 0 0 0 0 0 0 0 1 0 2 4 0 2 0 2 3 4 1 5 3 6 3 4 7 5\\
    0 0 0 0 0 0 0 0 0 0 0 0 0 0 0 2 2 2 0 4 4 0 4 6 0 4 2 6 4 6 0 6\\
    0 0 0 0 0 0 0 0 0 0 0 0 0 0 0 0 4 0 0 4 0 0 0 0 0 0 0 0 0 0 0 0\\
\end{array}\right),$$
$$G_9=\left(\begin{array}{c}
	 1 0 0 0 0 0 0 0 0 0 0 0 0 0 0 0 3 0 7 6 0 5 2 5 6 5 2 2 0 7 1 2\\
    0 1 0 0 0 0 0 0 0 0 0 0 0 0 0 0 3 7 4 3 6 4 5 2 6 6 5 2 2 0 7 1\\
    0 0 1 0 0 0 0 0 0 0 0 0 0 0 0 1 0 5 6 6 1 7 3 5 5 1 2 4 5 0 3 3\\
    0 0 0 1 0 0 0 0 0 0 0 0 0 0 0 0 3 6 3 7 4 3 6 4 1 7 6 6 5 2 2 0\\
    0 0 0 0 1 0 0 0 0 0 0 0 0 0 0 0 0 7 2 3 3 4 3 6 0 1 7 6 6 5 2 2\\
    0 0 0 0 0 1 0 0 0 0 0 0 0 0 0 0 2 4 3 6 3 7 4 3 6 0 1 7 6 6 5 2\\
    0 0 0 0 0 0 1 0 0 0 0 0 0 0 0 1 3 0 7 5 0 4 6 4 4 1 4 0 2 4 1 1\\
    0 0 0 0 0 0 0 1 0 0 0 0 0 0 0 0 0 1 6 4 7 6 3 7 3 6 6 0 1 7 6 6\\
    0 0 0 0 0 0 0 0 1 0 0 0 0 0 0 1 0 0 4 1 2 3 1 5 5 6 2 2 7 0 0 0\\
    0 0 0 0 0 0 0 0 0 1 0 0 0 0 0 0 3 2 2 1 7 0 2 2 4 7 3 6 3 4 2 5\\
    0 0 0 0 0 0 0 0 0 0 1 0 0 0 0 0 2 7 6 2 5 7 0 2 3 4 7 3 6 3 4 2\\
    0 0 0 0 0 0 0 0 0 0 0 1 0 0 0 0 2 3 5 0 2 6 3 2 5 2 5 4 6 3 4 7\\
    0 0 0 0 0 0 0 0 0 0 0 0 1 0 0 0 0 6 6 3 2 2 1 7 4 6 3 4 7 3 6 3\\
    0 0 0 0 0 0 0 0 0 0 0 0 0 1 0 0 1 0 6 6 3 2 2 1 5 4 6 3 4 7 3 6\\
    0 0 0 0 0 0 0 0 0 0 0 0 0 0 1 0 3 5 4 6 2 3 2 2 2 5 4 6 3 4 7 3\\
    0 0 0 0 0 0 0 0 0 0 0 0 0 0 0 2 0 0 2 4 0 2 6 0 4 6 0 6 6 4 6 0\\
    0 0 0 0 0 0 0 0 0 0 0 0 0 0 0 0 4 4 4 0 4 0 0 0 0 0 0 0 0 0 0 0\\
\end{array}\right),\
G_{10}=\left(\begin{array}{c}
	 1 0 0 0 0 0 0 0 0 0 0 0 0 0 0 0 3 4 3 6 0 5 2 1 2 5 2 2 0 7 1 2\\
    0 1 0 0 0 0 0 0 0 0 0 0 0 0 0 1 1 6 5 3 4 2 0 4 3 0 2 6 0 3 3 0\\
    0 0 1 0 0 0 0 0 0 0 0 0 0 0 0 1 0 2 0 4 5 4 7 3 0 0 3 1 0 5 4 6\\
    0 0 0 1 0 0 0 0 0 0 0 0 0 0 0 0 3 6 3 7 4 3 6 4 1 7 6 6 5 2 2 0\\
    0 0 0 0 1 0 0 0 0 0 0 0 0 0 0 1 2 2 7 3 5 2 6 0 5 3 4 2 4 0 6 1\\
    0 0 0 0 0 1 0 0 0 0 0 0 0 0 0 0 2 4 3 6 3 7 4 3 6 0 1 7 6 6 5 2\\
    0 0 0 0 0 0 1 0 0 0 0 0 0 0 0 1 3 1 5 3 4 1 2 6 3 0 5 5 5 1 2 4\\
    0 0 0 0 0 0 0 1 0 0 0 0 0 0 0 1 2 4 3 4 1 4 6 1 0 0 3 4 7 2 2 5\\
    0 0 0 0 0 0 0 0 1 0 0 0 0 0 0 0 2 2 1 7 0 2 2 5 7 3 6 3 4 2 5 4\\
    0 0 0 0 0 0 0 0 0 1 0 0 0 0 0 0 3 2 2 1 7 0 2 2 4 7 3 6 3 4 2 5\\
    0 0 0 0 0 0 0 0 0 0 1 0 0 0 0 0 2 3 2 2 5 7 0 6 7 4 7 3 6 3 4 2\\
    0 0 0 0 0 0 0 0 0 0 0 1 0 0 0 0 2 6 3 2 6 1 7 4 2 3 4 7 3 6 3 4\\
    0 0 0 0 0 0 0 0 0 0 0 0 1 0 0 1 2 5 7 3 4 0 4 5 5 0 0 0 5 6 2 2\\
    0 0 0 0 0 0 0 0 0 0 0 0 0 1 0 0 1 0 6 6 3 2 2 1 5 4 6 3 4 7 3 6\\
    0 0 0 0 0 0 0 0 0 0 0 0 0 0 1 1 1 0 1 6 0 1 5 0 3 7 1 2 1 7 3 2\\
    0 0 0 0 0 0 0 0 0 0 0 0 0 0 0 2 0 6 2 0 0 4 6 0 6 4 2 0 4 6 0 6\\
    0 0 0 0 0 0 0 0 0 0 0 0 0 0 0 0 4 0 0 0 4 0 0 4 4 0 0 0 0 0 0 0\\
\end{array}\right).$$
}
So, we obatined at least $10$ new inequivalent extremal Type II $\mathbb{Z}_8$-codes of length $32.$

\end{proof}

Proposition \ref{newZ8}, together with results from  \cite{6810} and \cite{Hnovi}, yields the following theorem.

\begin{thm} 
There are at least $16$ inequivalent extremal Type II $\mathbb{Z}_8$-codes of length $32.$
\end{thm}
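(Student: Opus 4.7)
The plan is to simply combine the count of previously known extremal Type II $\mathbb{Z}_8$-codes of length $32$ with the count of new codes produced in Proposition \ref{newZ8}. First, I would recall from the discussion at the beginning of Subsection \ref{results} (citing \cite{Hnovi} and \cite{6810}) that there are at least $6$ inequivalent extremal Type II $\mathbb{Z}_8$-codes of length $32$ already in the literature, namely $C_{8,32,1},\dots,C_{8,32,5}$ and $D_{32}$. Next, I would invoke Proposition \ref{newZ8}, which exhibits at least $10$ further inequivalent extremal Type II $\mathbb{Z}_8$-codes of length $32$, namely the representatives $C_1,\dots,C_{10}$.

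The main (and only) obstacle is checking that no code among the $10$ new representatives is equivalent to any of the $6$ previously known codes. Here the key observation is that equivalence of codes over $\mathbb{Z}_{2^m}$ preserves the type $(k_1,k_2,\ldots,k_m)$, since it preserves the order of the code and the $2$-adic structure of its invariant factors. The six previously known codes are all of type $(16,0,0)$, while the ten new representatives $C_1,\dots,C_{10}$ are, by construction via Theorem \ref{genMat} with $m=3$, of type $(15,1,1)$. Hence no new code can be equivalent to any previously known one.

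Finally, Proposition \ref{newZ8} already asserts that $C_1,\dots,C_{10}$ are pairwise inequivalent; this is justified by the ten distinct weight distributions of the associated binary residue codes displayed in Table \ref{tablica}, since the binary residue code is an equivalence invariant. Adding these counts gives $6+10=16$ pairwise inequivalent extremal Type II $\mathbb{Z}_8$-codes of length $32$, which is the claim.
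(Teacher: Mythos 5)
Your proposal is correct and matches the paper's (very brief) argument: the paper likewise obtains $16$ by combining the six known codes of type $(16,0,0)$ from \cite{Hnovi} and \cite{6810} with the ten new type $(15,1,1)$ codes of Proposition \ref{newZ8}, the type distinction ruling out any overlap. Your added remarks that type and the binary residue weight distribution are equivalence invariants simply make explicit what the paper leaves implicit.
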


\begin{remark}
According to \cite{codetables}, binary $[32,15,8]$ codes are the optimal binary $[32,15]$ codes. Therefore, all constructed extremal Type II $\mathbb{Z}_8$-codes of length $32$ have optimal binary residue codes.
\end{remark}

\begin{center}{\bf Acknowledgement}\end{center}
This work has been supported by Croatian Science Foundation under the project 4571 and by the University of Rijeka under the Project uniri-iskusni-prirod-23-62-3007.

\end{document}